\providecommand{\U}[1]{\protect\rule{.1in}{.1in}}
\newtheorem{theorem}{Theorem}
\newtheorem{definition}[theorem]{Definition}
\newtheorem{proposition}[theorem]{Proposition}
\newtheorem{remark}[theorem]{Remark}
\newenvironment{proof}[1][Proof]{\noindent\textbf{#1.} }{\ \rule{0.5em}{0.5em}}
\begin{document}

\title{A Neumann series of Bessel functions representation for solutions of the
radial Dirac system}
\author{Vladislav V. Kravchenko$^{1}$, Elina L. Shishkina$^{2}$ and Sergii M.
Torba$^{1}$\\{\small $^{1}$ Departamento de Matem\'{a}ticas, CINVESTAV del IPN, Unidad
Quer\'{e}taro, }\\{\small Libramiento Norponiente \#2000, Fracc. Real de Juriquilla,
Quer\'{e}taro, Qro., 76230 MEXICO.}\\{\small $^{2}$ Voronezh State University.}\\{\small e-mail: vkravchenko@math.cinvestav.edu.mx,
storba@math.cinvestav.edu.mx, \thanks{Research was supported by CONACYT,
Mexico via the projects 222478 and 284470. Research of Vladislav Kravchenko
was supported by the Regional mathematical center of the Southern Federal
University, Russia.}}}
\maketitle

\begin{abstract}
A new representation for a regular solution of the radial Dirac system of a
special form is obtained. The solution is represented as a Neumann series of
Bessel functions uniformly convergent with respect to the spectral parameter.
For the coefficients of the series convenient for numerical computation
recurrent integration formulas are given. Numerical examples are presented.

\end{abstract}

\section{Introduction}

We consider the one-dimensional radial Dirac system of the form
\begin{equation}%
\begin{pmatrix}
\omega_{1} & -\dfrac{d}{dr}+\dfrac{\kappa}{r}-p(r)\\
\dfrac{d}{dr}+\dfrac{\kappa}{r}-p(r) & \omega_{2}%
\end{pmatrix}%
\begin{pmatrix}
g(r)\\
\\
f(r)
\end{pmatrix}
=0,\label{DiracEq}%
\end{equation}
where $p$ is absolutely continuous complex-valued function on some interval
$[0,b]$, $\omega_{1},\omega_{2}\in\mathbb{C}$, $\kappa$ is the spin-orbit
quantum number, $g$ and $f$ are lower and upper radial wave functions, respectively.

The system \eqref{DiracEq} with $\omega_{1}=\frac{mc^{2}+V_{s}(r)+E-V_{v}%
(r)}{\hbar c}$, $\omega_{2}=\frac{mc^{2}+V_{s}(r)-E+V_{v}(r)}{\hbar c}$ and
$p(r)=\frac{V_{ps}(r)}{\hbar c}$ arises in quantum mechanics when studying
the radial Dirac equation. Here $V_{s}$ is a scalar potential, $V_{v}$ is the
time component of a vector potential and $V_{ps}$ is a pseudoscalar or tensor
potential, see, for example, formula 2.1 in \cite{Alhaidari}, formulas
(21)--(22) in \cite{Eshghi}, system (13)--(14) in \cite{Lisboa} and (1) in
\cite{Linneaus}. The system \eqref{DiracEq} is a special case of the radial
Dirac equation in the presence of a tensor or a pseudoscalar potential, and
when both scalar and vector potentials are constant. System \eqref{DiracEq}
appears in the recent Jackiw-Pi model of the bilayer graphene \cite{Jackiw},
\cite{KhR}. There is a considerable number of publications in which Dirac-type
equations \eqref{DiracEq} are examined, but mostly either exactly solvable
potentials are sought (see, e.g., \cite{Alhaidari}, \cite{Eshghi},
\cite{Lisboa}), or an approximate solution is constructed for a concrete
potential (see, for example, \cite{Ikot}).

In the present work for an arbitrary potential $p(r)$ we obtain an analytical
representation for a regular solution of \eqref{DiracEq} in the form of a
functional series with a simple recurrent integration procedure for
calculating its coefficients. The series has the form of a Neumann series of
Bessel functions (NSBF) (see, e.g., \cite{Watson}, \cite{Wilkins} and
\cite{Baricz et al Book} for more information on NSBF). The following feature
of the obtained NSBF representation makes it especially interesting. Its
partial sums admit spectral parameter independent error estimates, which
guarantee equally accurate approximations of  exact solutions both for small
and for large values of the spectral parameter. More precisely, when the
coupling constants coincide, $\omega_{1}=\omega_{2}$, the estimates are
independent of their values, while in the case $\omega_{1}\neq\omega_{2}$ the
estimates involve the factor $\left\vert \sqrt{\omega_{2}/\omega_{1}%
}\right\vert $, and thus depends on how much the coupling constants differ
from each other.

The NSBF representations for solutions of Sturm-Liouville type equations
proved to be useful for solving both direct and inverse spectral problems
\cite{DKK2019MMAS}, \cite{KKK QuantumFest}, \cite{Kr2019JIIP},
\cite{KrBook2020}, \cite{KNT 2015}, \cite{KrShishkinaTorba2018},
\cite{KrShishkinaTorba2020}, \cite{KT2018Calcolo},
\cite{KT2020improvedNeumann}, \cite{KTC}. In \cite{KNT 2015} an NSBF
representation was obtained for solutions of the one-dimensional stationary
Schr\"{o}dinger equation. In \cite{KT2018Calcolo} that result was generalized
onto the case of an arbitrary regular Sturm-Liouville equation. Recently in
\cite{KT2020improvedNeumann} an NSBF representation was obtained for regular
solutions of perturbed Bessel equations. In \cite{DKK2019MMAS}, \cite{KKK
QuantumFest}, \cite{Kr2019JIIP}, \cite{KrBook2020},
\cite{KrShishkinaTorba2020} NSBF representations for solutions were used for
solving inverse spectral problems.

In the present paper an NSBF representation for regular solutions of
\eqref{DiracEq} is obtained by transforming the system into a couple of
perturbed Bessel equations and using results from \cite{KT2020improvedNeumann}%
. We prove the above mentioned error estimates for partial sums of the series
representations and discuss the numerical implementation of the NSBF
representation. We show that the spectral parameter independent error
estimates are evident, indeed, in numerical experiments and show the
applicability of the obtained NSBF representation for solving spectral
problems for \eqref{DiracEq}.

The paper is organized as follows. In Section \ref{Sect2} we obtain the NSBF
representation for the regular solution of \eqref{DiracEq} and prove a
convergence result for the approximate solution. In Section \ref{Sect3} we
summarize the steps required for numerical solution of equation
\eqref{DiracEq} and related spectral problems using the proposed
representation and show numerical results for the Dirac oscillator.

\section{A representation of the solution}

\label{Sect2}

Consider the following two component radial Dirac system
\begin{align}
\left(  \frac{d}{dr}-\frac{\kappa}{r}+p(r)\right)  f & =\omega_{1}%
g,\label{EQ01}\\
\left(  \frac{d}{dr}+\frac{\kappa}{r}-p(r)\right)  g & =-\omega_{2}f,
\label{EQ02}%
\end{align}
where $\omega_{1},\omega_{2}\in\mathbb{C}$, $\kappa\geq\frac{1}{2}$, and
$p\in\operatorname{AC}[0,b]$ is in general a complex valued function.

\begin{definition}
A pair of functions $(f_{\kappa},g_{\kappa})$ is called a \textbf{regular
solution} of the system \eqref{EQ01}--\eqref{EQ02} if it satisfies the system
as well as the following asymptotic conditions
\[
f_{\kappa}(r)\sim C_{f}r^{\kappa},\qquad g_{\kappa}(r)\sim C_{g}r^{\kappa
+1},\quad\text{when }r\rightarrow0
\]
where $C_{f}$ and $C_{g}$ are some constants.
\end{definition}

Together with the potential $p$ the following functions will be considered
\begin{equation}
q_{1}(r)=p\,^{\prime}(r)-\frac{2\kappa}{r}p(r)+p^{2}(r)\qquad\text{and}\qquad
q_{2}(r)=-p^{\prime}(r)-\frac{2\kappa}{r}\,p(r)+p^{2}(r). \label{Q}%
\end{equation}
Note that if $(f_{\kappa},g_{\kappa})$ is a regular solution of
\eqref{EQ01}--\eqref{EQ02}, the functions $f_{\kappa}$ and $g_{\kappa}$ are
necessarily regular solutions of the equations
\begin{equation}
-f^{\prime\prime}+\left(  \frac{\kappa(\kappa-1)}{r^{2}}+q_{2}(r)\right)
f=\omega^{2}f,\qquad r\in(0,b] \label{Eqq2}%
\end{equation}
and
\begin{equation}
-g^{\prime\prime}+\left(  \frac{\kappa\left(  \kappa+1\right)  }{r^{2}}%
+q_{1}(r)\right)  g=\omega^{2}g,\qquad r\in(0,b], \label{Eqq1}%
\end{equation}
respectively with $\omega^{2}=\omega_{1}\omega_{2}$.

Note that for $p\in\operatorname{AC}[0,b]$ both potentials $q_{1}$ and $q_{2}$
are such that $r^{\varepsilon}q_{1,2}(r)\in L_{1}(0,b)$ for any small
$\varepsilon>0$, hence the conditions on the potential from
\cite{KT2020improvedNeumann} are satisfied. In order to apply the results of
\cite{KT2020improvedNeumann} to equations \eqref{Eqq2} and \eqref{Eqq1} we
need two solutions of the equations
\begin{equation}
-f_{0}^{\prime\prime}+\left(  \frac{\kappa(\kappa-1)}{r^{2}}+q_{2}(r)\right)
f_{0}=0\qquad\text{and}\qquad-g_{0}^{\prime\prime}+\left(  \frac{\kappa
(\kappa+1)}{r^{2}}+q_{1}(r)\right)  g_{0}=0,\label{Bessel hom}%
\end{equation}
non-vanishing on $(0,b]$ and satisfying the following asymptotics at zero
\begin{equation}
f_{0}(r)\sim r^{\kappa}\qquad\text{and}\qquad g_{0}(r)\sim r^{\kappa+1}%
,\quad\text{when }r\rightarrow0.\label{asympt f0 g0}%
\end{equation}
The solution $f_{0}$ can be directly obtained by taking $\omega_{1}=0$ in
\eqref{EQ01} and is given by
\begin{equation}
f_{0}(r)=r^{\kappa}\exp\left(  -\int_{0}^{r}p(s)\,ds\right)  .\label{Sol f0}%
\end{equation}
To obtain the solution $g_{0}$ note that the function $1/f_{0}$ is a solution
of \eqref{EQ02} with $\omega_{2}=0$,\ and hence it is the solution of the
second equation in \eqref{Bessel hom} satisfying the asymptotic relation
$1/f_{0}(r)\sim r^{-\kappa}$, $r\rightarrow0$. A second linearly independent
solution of \eqref{EQ02} with $\omega_{2}=0$ can be chosen in the form
$\frac{C}{f_{0}(r)}\int_{0}^{r}f_{0}^{2}(s)\,ds$. Chosing $C=2\kappa+1$, i.e.,
taking
\begin{equation}
g_{0}(r)=(2\kappa+1)r^{-\kappa}\exp\left(  \int_{0}^{r}p(s)\,ds\right)
\int_{0}^{r}t^{2\kappa}\exp\left(  -2\int_{0}^{t}p(s)\,ds\right)
\,dt\label{Sol g0}%
\end{equation}
we obtain the solution of the second equation in \eqref{Bessel hom} satisfying \eqref{asympt f0 g0}.

It can be seen from \eqref{Sol f0} and \eqref{Sol g0} that the derivatives of
the solutions $f_{0}$ and $g_{0}$ are given by
\begin{equation}
\label{Sols derivs}f_{0}^{\prime}(r)=\left( \frac{\kappa}r-p(r)\right)
f_{0}(r)\qquad\text{and}\qquad g_{0}^{\prime}(r) = \left( p-\frac{\kappa
}r\right) g_{0}(r) + (2\kappa+1)f_{0}(r).
\end{equation}

The solution $f_{0}$ given by \eqref{Sol f0} is always non-vanishing on
$(0,b]$. The solution $g_{0}$ given by \eqref{Sol g0} is definitely
non-vanishing for real valued potentials $p$ but may possess zeros for complex
valued functions $p$. For this reason the following \textbf{assumption (A)}
concerning the potential $p$ will be made throughout this paper. We assume
that the second equation in \eqref{Bessel hom} admits a regular solution
$g_{0}$ which does not vanish on $(0,b]$. This assumption does not imply any
additional restriction on $p$ for the following reason. In \cite[Proposition
B.1]{KT2020improvedNeumann} we show that one can always choose such a constant
$c$ that the second equation in \eqref{Bessel hom} with the potential
$\widetilde{q}_{1}(x):=q_{1}(x)+c$ possesses a non-vanishing solution.
Equation \eqref{Eqq1} can then be written as $-g^{\prime\prime}+\left(
\frac{\kappa\left(  \kappa+1\right)  }{r^{2}}+\widetilde{q}_{1}(r)\right)
g=\widetilde{\omega}^{2}g$ with $\widetilde{\omega}^{2}=\omega^{2}+c$ which
leads to the same results and conclusions as below. Also, one may construct
the regular solution of the system \eqref{EQ01}--\eqref{EQ02} using only the
solution $f$ and its derivative (see Remark \ref{Remark OtherWaySol}), however
loosing an attractive possibility to verify the accuracy of approximate
solutions (see Remark \ref{Remark Accuracy} and Subsection
\ref{SubSectAlgorith}).

Thus, without loss of generality we assume that the regular solution $g_{0}$
of the second equation in \eqref{Bessel hom} satisfying \eqref{asympt f0 g0}
does not have zeros in $(0,b]$.

\begin{theorem}
\label{ThMain}Let $p\in\operatorname{AC}[0,b]$, and the assumption (A) be
fulfilled. Then a regular solution of the system \eqref{EQ01}--\eqref{EQ02}
satisfying the asymptotic relations (here $\omega^{2}=\omega_{1}\omega_{2}$)
\[
f_{\kappa}(r)\sim-\frac{\omega^{\kappa+1}}{\omega_{2}} d(\kappa-1)r^{\kappa
}\qquad\text{and}\qquad g_{\kappa}(r)\sim\omega^{\kappa+1}d(\kappa
)r^{\kappa+1},\quad r\rightarrow0,
\]
has the form
\begin{align}
\label{Sol1}f_{\kappa}(r) & =-\frac{\omega^{2}}{\omega_{2}} r j_{\kappa
-1}(\omega r)-\frac{\omega}{\omega_{2}}\sum_{n=0}^{\infty}\beta_{2,n}%
(r)j_{\kappa+2n}(\omega r),\\
g_{\kappa}(r) & =\omega r j_{\kappa}(\omega r)+\sum_{n=0}^{\infty}\beta
_{1,n}(r)j_{\kappa+2n+1}(\omega r),\label{Sol2}%
\end{align}
where $j_{\nu}(r)=\sqrt{\frac{\pi}{2r}}J_{\nu+{\frac{1}{2}}}(r)$ is the
spherical Bessel function of the first kind,
\[
d(\kappa):=\frac{\sqrt\pi}{2^{\kappa+1}\Gamma(\kappa+3/2)}.
\]
Denote $u_{1}:=g_{0}$ and $u_{2}:=f_{0}$, where $f_{0}$ and $g_{0}$ are
solutions of \eqref{Bessel hom} satisfying \eqref{asympt f0 g0}. Then the
functions $\beta_{j,n}$, $j\in\{1,2\}$, $n\ge0$, can be found from the
recurrent formulas
\begin{align}
\beta_{j,0}(r)  & = (2\kappa-2j + 5)\left( \frac{u_{j}(r)}{r^{\kappa+2-j}%
}-1\right) ,\qquad j\in\{1,2\},\label{Beta1}\\
\beta_{j,n}(r) & =-\frac{4n+2\kappa-2j+5}{4n+2\kappa-2j+1}\left[
\beta_{j,n-1}(r)+ \frac{2(4n+2\kappa-2j+3)u_{j}(r)\theta_{j,n}(r)}%
{r^{2n+\kappa-j+2}}\right]  ,\label{RecBeta1}\\
\displaybreak[2] \theta_{j,n}(r) & =\int_{0}^{r}\frac{\eta_{j,n}%
(t)-t^{2n+\kappa-j+1}\beta_{j,n-1}(t)u_{j}(t)}{u_{j}^{2}(t)}dt,\\
\displaybreak[2] \eta_{j,n}(r) & =\int_{0}^{r}\left[  tu_{j}^{\prime
}(t)+(2n+\kappa-j+1)u_{j}(t)\right]  t^{2n+\kappa-j}\beta_{j,n-1}%
(t)dt.\label{etan}%
\end{align}

\end{theorem}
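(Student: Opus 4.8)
The plan is to reduce the Dirac system to the two perturbed Bessel equations \eqref{Eqq2} and \eqref{Eqq1} and then quote the NSBF representation of \cite{KT2020improvedNeumann} for each of them. First I would recall that, as already observed in the excerpt, any regular solution $(f_\kappa,g_\kappa)$ of \eqref{EQ01}--\eqref{EQ02} has components that are regular solutions of \eqref{Eqq2} and \eqref{Eqq1} respectively, with the same $\omega^2=\omega_1\omega_2$. By assumption (A) the homogeneous companion equations \eqref{Bessel hom} admit the non-vanishing solutions $f_0$, $g_0$ with the prescribed asymptotics \eqref{asympt f0 g0}, so the hypotheses of \cite{KT2020improvedNeumann} hold for both equations. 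Applying that theorem to \eqref{Eqq1} with the particular solution $u_1=g_0$ yields a representation of the form $g_\kappa(r)=\omega r\, j_\kappa(\omega r)+\sum_{n\ge0}\beta_{1,n}(r)j_{\kappa+2n+1}(\omega r)$ together with the recurrent formulas \eqref{Beta1}--\eqref{etan} in the case $j=1$; applying it to \eqref{Eqq2} with $u_2=f_0$ yields in the same way $\sum_{n\ge0}\beta_{2,n}(r)j_{\kappa+2n}(\omega r)$ plus the leading term $\omega^2 r\, j_{\kappa-1}(\omega r)$, which after suitable normalization produces the stated $f_\kappa$. The indexing shift in \eqref{Beta1}--\eqref{etan} (the quantity $2\kappa-2j+5$, the argument $\kappa-j+\cdots$) is just the bookkeeping that turns the two instances $\kappa\mapsto\kappa$ (for $g$, i.e. $j=1$) and $\kappa\mapsto\kappa-1$ (for $f$, i.e. $j=2$) of the general formula into a single pair of formulas.

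The nontrivial point is that the two NSBF series, obtained independently from the two scalar equations, must be \emph{matched} so that the pair $(f_\kappa,g_\kappa)$ actually solves the first-order system \eqref{EQ01}--\eqref{EQ02}, not merely the two decoupled second-order equations. Here is where I would use the first-order relations. Differentiating $g_\kappa$ (using the derivative formula \eqref{Sols derivs} for $g_0$ and the classical recurrence/derivative identities for spherical Bessel functions) and substituting into \eqref{EQ02} forces $f_\kappa=-\frac{1}{\omega_2}\bigl(g_\kappa'+(\frac{\kappa}{r}-p)g_\kappa\bigr)$; I would check that this expression, after re-expanding the Bessel functions via $j_{\nu}'(z)=j_{\nu-1}(z)-\frac{\nu+1}{z}j_\nu(z)$ and $j_{\nu-1}(z)+j_{\nu+1}(z)=\frac{2\nu+1}{z}j_\nu(z)$, collapses exactly to the series $-\frac{\omega^2}{\omega_2}r j_{\kappa-1}(\omega r)-\frac{\omega}{\omega_2}\sum_n\beta_{2,n}(r)j_{\kappa+2n}(\omega r)$, with the $\beta_{2,n}$ satisfying \eqref{Beta1}--\eqref{etan} for $j=2$. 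Equivalently, one verifies that the $\beta_{2,n}$ defined by those recurrences are precisely the coefficients produced when $g_\kappa'+(\frac\kappa r-p)g_\kappa$ is regrouped into a Neumann series. Then \eqref{EQ01} holds automatically, since both $f_\kappa$ and $g_\kappa$ solve their second-order equations and \eqref{EQ02} holds (a standard consistency argument: apply $\frac{d}{dr}-\frac\kappa r+p$ to \eqref{EQ02}, use \eqref{Eqq2}, and conclude that $(\frac{d}{dr}-\frac\kappa r+p)f_\kappa-\omega_1 g_\kappa$ satisfies a first-order homogeneous ODE with a regular zero at the origin, hence vanishes).

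Finally I would pin down the normalization constants. The representations from \cite{KT2020improvedNeumann} come with a specific behaviour at $r\to0$: since $j_\nu(z)\sim z^\nu/(2^\nu\Gamma(\nu+3/2))\sqrt{\pi}/2 = d(\nu)z^\nu$ as $z\to0$ and each $\beta_{j,n}(r)\to0$ as $r\to0$ (immediate from \eqref{Beta1}--\eqref{etan} together with $u_j(r)/r^{\kappa+2-j}\to$ const, matching the asymptotics \eqref{asympt f0 g0}), the leading term of $g_\kappa$ is $\omega r\cdot d(\kappa)(\omega r)^\kappa=\omega^{\kappa+1}d(\kappa)r^{\kappa+1}$, and that of $f_\kappa$ is $-\frac{\omega^2}{\omega_2}r\cdot d(\kappa-1)(\omega r)^{\kappa-1}=-\frac{\omega^{\kappa+1}}{\omega_2}d(\kappa-1)r^{\kappa}$, which are exactly the claimed asymptotic relations. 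The main obstacle, as indicated, is the bookkeeping in the middle paragraph: showing that term-by-term differentiation of the $g$-series and reassembly into a Bessel-Neumann series reproduces \emph{verbatim} the recurrences \eqref{Beta1}--\eqref{etan} for $j=2$ (in particular justifying term-by-term differentiation via the uniform convergence established in \cite{KT2020improvedNeumann}, and tracking the index shifts carefully). Everything else is either a direct citation or a routine consistency argument for first-order systems.
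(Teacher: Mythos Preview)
Your overall strategy---decouple into the two perturbed Bessel equations \eqref{Eqq2}, \eqref{Eqq1} and cite \cite{KT2020improvedNeumann} for each---is exactly what the paper does. Where you diverge is in the matching step, which you flag as ``the nontrivial point'' and ``the main obstacle.'' You propose to differentiate the NSBF series for $g_\kappa$ term by term, apply the first-order operator from \eqref{EQ02}, and then show by direct manipulation of Bessel identities that the result is the NSBF series \eqref{Sol1} with coefficients $\beta_{2,n}$ satisfying the $j=2$ recurrences. That computation would have to relate the $\beta_{1,n}$ (built from $u_1=g_0$) to the $\beta_{2,n}$ (built from the unrelated particular solution $u_2=f_0$), and there is no reason to expect this to collapse ``verbatim'' at the level of individual coefficients; at best you would get some other Neumann expansion of the same function, and identifying it with \eqref{Sol1} coefficient-by-coefficient is not automatic.

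The paper bypasses this entirely with a one-line uniqueness argument. The regular solution of the perturbed Bessel equation \eqref{Eqq2} is unique up to a multiplicative constant, so it suffices to apply \cite{KT2020improvedNeumann} \emph{independently} to \eqref{Eqq2} to obtain a solution $\widetilde f_\kappa(r)=\omega r\,j_{\kappa-1}(\omega r)+\sum_n\beta_{2,n}(r)j_{\kappa+2n}(\omega r)$ with $\widetilde f_\kappa(r)\sim d(\kappa-1)(\omega r)^\kappa$, and then determine the scalar $c$ with $f_\kappa=c\,\widetilde f_\kappa$ by comparing asymptotics at $r=0$. From \eqref{EQ02} one reads off $f_\kappa(r)\sim-\tfrac{\omega}{\omega_2}(2\kappa+1)d(\kappa)(\omega r)^\kappa$, and since $(2\kappa+1)d(\kappa)=d(\kappa-1)$ this gives $c=-\omega/\omega_2$ and hence \eqref{Sol1}. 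No series differentiation, no coupling between the two families $\{\beta_{1,n}\}$ and $\{\beta_{2,n}\}$, and no consistency argument for \eqref{EQ01} is needed: the pair $(f_\kappa,g_\kappa)$ is, by construction via \eqref{EQ02} and the asymptotics, the regular solution of the system.
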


\begin{proof}
From (\ref{EQ02}) we have that
\[
f_{\kappa}=-\frac{1}{\omega_{2}}\left(  g_{\kappa}^{\prime}+\kappa g_{\kappa
}/r-pg_{\kappa}\right)  .
\]
Hence if $g_{\kappa}(r)\sim d(\kappa)(\omega r)^{\kappa+1}$, $g^{\prime
}_{\kappa}(r)\sim(\kappa+1)\omega d(\kappa)(\omega r)^{\kappa}$ when
$r\rightarrow0$, then $f_{\kappa}(r)\sim-\frac{\omega}{\omega_{2}}\left(
2\kappa+1\right) d(\kappa)(\omega r)^{\kappa}$. Now we apply Theorem 5.2 from
\cite{KT2020improvedNeumann} in order to find out that a solution $g_{\kappa}$
of (\ref{Eqq1}) satisfying the relation $g_{\kappa}(r)\sim d(\kappa)(\omega
r)^{\kappa+1}$, has the form (\ref{Sol2}), meanwhile a solution $\widetilde
{f}_{\kappa}$ of (\ref{Eqq2}) satisfying the relation $\widetilde{f}_{\kappa
}(r)\sim d(\kappa-1)(\omega r)^{\kappa}$, when $r\rightarrow0$, can be written
as
\[
\widetilde{f}_{\kappa}(r)=\omega r j_{\kappa-1}(\omega r)+\sum_{n=0}^{\infty
}\beta_{2,n}(r)j_{\kappa+2n}(\omega r).
\]
And $f_{\kappa}=-\frac{\omega}{\omega_{2}}\left(  2\kappa+1\right)
\frac{d(\kappa)}{d(\kappa-1)} \widetilde{f}_{\kappa} = -\frac{\omega}%
{\omega_{2}} \widetilde{f}_{\kappa}$, which leads to (\ref{Sol1}).
\end{proof}

For practical use of the representation (\ref{Sol1}), (\ref{Sol2}) the
estimates of the difference between the exact solution and its approximation
defined as
\begin{align}
f_{\kappa,N}(r) &  =--\frac{\omega^{2}}{\omega_{2}}rj_{\kappa-1}(\omega
r)-\frac{\omega}{\omega_{2}}\sum_{n=0}^{N}\beta_{2,n}(r)j_{\kappa+2n}(\omega
r),\label{fkN}\\
g_{\kappa,N}(r) &  =\omega rj_{\kappa}(\omega r)+\sum_{n=0}^{N}\beta
_{1,n}(r)j_{\kappa+2n+1}(\omega r)\label{gkN}%
\end{align}
are needed. From Theorem \ref{ThMain} using \cite[Theorem 5.2]%
{KT2020improvedNeumann} the following result follows immediately.

\begin{proposition}
Under the conditions of Theorem \ref{ThMain} the following inequalities are
valid
\[
|g_{\kappa}(r)-g_{\kappa,N}(r)|\leq\sqrt{r}\varepsilon_{N}(r)\quad
\text{and\quad}|f_{\kappa}(r)-f_{\kappa,N}(r)|\leq\left| \frac{\omega}%
{\omega_{2}}\right| \sqrt{r} \varepsilon_{N}(r)
\]
for all $\omega\in\mathbb{R}$, $\omega_{2}\in\mathbb{R}\setminus\{0\}$, where
$\varepsilon_{N}$ is a nonnegative function independent on $\omega_{1}$ and
$\omega_{2}$, such that $\max_{r\in[0,b]}\varepsilon_{N}(r)\to0$ as
$N\to\infty$. Similar result holds for $\omega$ belonging to a strip
$|\operatorname{Im}\omega|\le C$, with addition of a multiplicative constant
dependent only on the value of $C$.

Suppose additionally that $p\in W_{1}^{2k}[0,b]$, $p(0)=0$ and $\frac{p(r)}%
{r}\in W_{1}^{2k-1}[0,b]$ for some $k\in\mathbb{N}$. Here $W_{1}^{k}[0,b]$
denotes the class of functions having $k$ derivatives, the last one belonging
to $L_{1}[0,b]$ space, and $p(r)/r$ is assumed to have a finite limit as
$r\rightarrow0$. Then there exists a constant $c$, such that
\[
\varepsilon_{N}(r)\leq\frac{c}{N^{k}},\qquad2N>\kappa+k+1.
\]

\end{proposition}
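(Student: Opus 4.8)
The plan is to derive both inequalities by applying \cite[Theorem 5.2]{KT2020improvedNeumann} separately to the two perturbed Bessel equations \eqref{Eqq1} and \eqref{Eqq2}, and then to transport the estimate from $\widetilde f_\kappa$ to $f_\kappa$ via the identity $f_\kappa=-\frac{\omega}{\omega_2}\widetilde f_\kappa$ established inside the proof of Theorem \ref{ThMain}. First I would recall that in that proof $g_\kappa$, as written in \eqref{Sol2}, is exactly the NSBF solution of \eqref{Eqq1} supplied by \cite[Theorem 5.2]{KT2020improvedNeumann} built on the non-vanishing reference solution $u_1=g_0$, while $\widetilde f_\kappa(r)=\omega r j_{\kappa-1}(\omega r)+\sum_{n\ge0}\beta_{2,n}(r)j_{\kappa+2n}(\omega r)$ is the corresponding NSBF solution of \eqref{Eqq2} built on $u_2=f_0$; the hypotheses of that theorem are in force because $p\in\operatorname{AC}[0,b]$ gives $r^{\varepsilon}q_{1,2}\in L_1(0,b)$ for all small $\varepsilon>0$, while $f_0$ never vanishes on $(0,b]$ and assumption (A) provides a non-vanishing $g_0$. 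The theorem then yields nonnegative, $\omega$-independent functions $\varepsilon_N^{(1)},\varepsilon_N^{(2)}$ with $\max_{[0,b]}\varepsilon_N^{(j)}\to0$ such that $|g_\kappa(r)-g_{\kappa,N}(r)|\le\sqrt r\,\varepsilon_N^{(1)}(r)$ and $|\widetilde f_\kappa(r)-\widetilde f_{\kappa,N}(r)|\le\sqrt r\,\varepsilon_N^{(2)}(r)$, where $\widetilde f_{\kappa,N}$ is the $N$-th partial sum; putting $\varepsilon_N:=\max\{\varepsilon_N^{(1)},\varepsilon_N^{(2)}\}$ produces one common majorant.

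The transport step is routine: comparing \eqref{fkN} with the expansion of $\widetilde f_\kappa$ shows term by term that $f_{\kappa,N}=-\frac{\omega}{\omega_2}\widetilde f_{\kappa,N}$, since the leading term $-\frac{\omega}{\omega_2}\cdot\omega r j_{\kappa-1}(\omega r)$ equals $-\frac{\omega^2}{\omega_2}r j_{\kappa-1}(\omega r)$ and each summand $-\frac{\omega}{\omega_2}\beta_{2,n}(r)j_{\kappa+2n}(\omega r)$ matches. Hence $|f_\kappa(r)-f_{\kappa,N}(r)|=\left|\frac{\omega}{\omega_2}\right|\,|\widetilde f_\kappa(r)-\widetilde f_{\kappa,N}(r)|\le\left|\frac{\omega}{\omega_2}\right|\sqrt r\,\varepsilon_N(r)$, and the bound on $g_\kappa$ is immediate. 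The restriction $\omega\in\mathbb R$, $\omega_2\in\mathbb R\setminus\{0\}$ is exactly what keeps $\varepsilon_N$ free of the spectral parameter; for $\omega$ in a strip $|\operatorname{Im}\omega|\le C$ one simply quotes the corresponding estimate in \cite{KT2020improvedNeumann}, which multiplies the right-hand sides by a factor depending only on $C$.

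It remains to obtain the rate $c/N^k$, and for this I would verify that the extra smoothness on $p$ feeds into the regularity hypotheses of \cite[Theorem 5.2]{KT2020improvedNeumann}. If $p\in W_1^{2k}[0,b]$, $p(0)=0$ and $p(r)/r\in W_1^{2k-1}[0,b]$, then from \eqref{Q} one gets $q_1,q_2\in W_1^{2k-1}[0,b]$: indeed $p'\in W_1^{2k-1}$, the term $\frac{2\kappa}{r}p$ is a multiple of $p(r)/r\in W_1^{2k-1}$, and $p^2\in W_1^{2k}\subset W_1^{2k-1}$ because $p\in W_1^{2k}\hookrightarrow C^{2k-1}[0,b]$. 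The reference solutions are equally regular: $f_0(r)/r^\kappa=\exp(-\int_0^r p)$ and its reciprocal, and $g_0(r)r^\kappa/(2\kappa+1)=\exp(\int_0^r p)\int_0^r t^{2\kappa}\exp(-2\int_0^t p)\,dt$, inherit the required number of $L_1$-derivatives, which controls $\beta_{j,0}$ in \eqref{Beta1}. With these inputs \cite[Theorem 5.2]{KT2020improvedNeumann} gives $\varepsilon_N^{(j)}(r)\le c_j/N^k$ once $2N$ exceeds the index of the equation plus $k+1$; the index of \eqref{Eqq1} is $\kappa$ and that of \eqref{Eqq2} is $\kappa-1$, so the binding constraint is $2N>\kappa+k+1$, and $c=\max\{c_1,c_2\}$ finishes the proof. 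I do not expect a genuine obstacle here, since the argument is bookkeeping, but the step deserving care is precisely this transfer of regularity: checking that the combination $p\in W_1^{2k}$, $p(0)=0$, $p/r\in W_1^{2k-1}$ is exactly what \cite[Theorem 5.2]{KT2020improvedNeumann} requires of $q_{1,2}$ and of the reference solutions, including the behaviour at the endpoint $r=0$, together with the elementary check that the partial sums obey $f_{\kappa,N}=-\frac{\omega}{\omega_2}\widetilde f_{\kappa,N}$ so that the factor $|\omega/\omega_2|$ — and nothing worse — appears.
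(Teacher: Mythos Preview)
Your approach is exactly the one the paper intends: the proposition is stated as following immediately from Theorem \ref{ThMain} together with \cite[Theorem 5.2]{KT2020improvedNeumann}, and you have correctly unpacked this by applying that theorem to each of the perturbed Bessel equations \eqref{Eqq1}, \eqref{Eqq2} and using the identity $f_\kappa=-\frac{\omega}{\omega_2}\widetilde f_\kappa$ (hence $f_{\kappa,N}=-\frac{\omega}{\omega_2}\widetilde f_{\kappa,N}$) to transport the estimate. The paper gives no further detail, so your write-up, including the verification that the smoothness hypotheses on $p$ yield the required regularity of $q_{1},q_{2}$, is a faithful elaboration of the intended argument.
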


The independence of $\varepsilon_{N}$ of $\omega$ implies that the approximate
solution $(f_{\kappa,N},g_{\kappa,N})$ remains good even for very large values
of $\left\vert \operatorname{Re}\omega\right\vert $.

\begin{remark}
\label{Remark Accuracy} Even though the derivatives of the regular solutions
$(f_{\kappa}, g_{\kappa})$ are readily available from \eqref{EQ01} and
\eqref{EQ02}, an independent representation (useful, e.g., for verification of
accuracy of approximate solutions) for them can be obtained from \cite[Theorem
6.3]{KT2020improvedNeumann}. Under the conditions and notations of Theorem
\ref{ThMain} let $Q_{j}(r):=\int_{0}^{r} q_{j}(t)\,dt$, $j\in\{1,2\}$ and let
the functions $\gamma_{j,n}$, $j\in\{1,2\}$, $n\ge0$ be defined as
\begin{align*}
\gamma_{j,0}(r)  &  = (2\kappa-2j + 5) \left( \frac{u_{j}^{\prime}%
(r)}{r^{\kappa-j+2}} - \frac{\kappa-j +2}{r} - \frac{Q_{j}(r)}2\right) ,\\
\gamma_{j,n}(r)  & = -\frac{4n+2\kappa-2j+5}{4n+2\kappa-2j+1}\biggl[\gamma
_{j,n-1}(r)\\
& \quad+ (4n+2\kappa-2j+3)\left( \frac{2u_{j}^{\prime}(r) \theta_{j,n}%
(r)}{r^{2n+\kappa-j+2}} + \frac{2\eta_{j,n}(r)}{u_{j}(r)r^{2n+\kappa-j+2}} -
\frac{\beta_{j,n-1}(r)}r\right) \biggr].
\end{align*}
Then
\begin{align}
f_{\kappa}^{\prime}(r)  & = -\frac{\omega^{3}}{\omega_{2}} r j_{\kappa
-2}(\omega r) -\left( \frac{rQ_{2}(r)}2 - \kappa+1\right) \frac{\omega^{2}%
}{\omega_{2}} j_{\kappa-1}(\omega r) - \frac{\omega}{\omega_{2}} \sum
_{n=0}^{\infty}\gamma_{2,n}(r)j_{2n+\kappa}(\omega r),\label{fprime}\\
g_{\kappa}^{\prime}(r) &= \omega^2 r j_{\kappa-1}(\omega r) + \left( \frac{rQ_{1}(r)}2 -
\kappa\right) \omega j_{\kappa}(\omega r) +\sum_{n=0}^{\infty}\gamma
_{1,n}(r)j_{2n+\kappa+1}(\omega r).\label{gprime}%
\end{align}

\end{remark}

\begin{remark}
\label{Remark OtherWaySol} The regular solution of the system
\eqref{EQ01}--\eqref{EQ02} can be obtained using only the particular solution
$f_{0}$ and related functions $\beta_{2,n}$ and $\gamma_{2,n}$, $n\ge0$,
without the need of the functions $g_{0}$, $\beta_{1,n}$ and $\gamma_{1,n}$ at
all. Indeed,
\[
g_{\kappa}= \frac{1}{\omega_{1}}\left( f_{\kappa}^{\prime}-\frac\kappa r
f_{\kappa}+p(r)f_{\kappa}\right) \qquad\text{and}\qquad g_{\kappa}^{\prime}=
-\omega_{2} f_{\kappa}- \frac\kappa r g_{\kappa}+ p(r) g_{\kappa},
\]
and the representations for $f_{\kappa}$ and $f_{\kappa}^{\prime}$ are given
by \eqref{Sol1} and \eqref{fprime}.
\end{remark}

\section{Numerical results}

\label{Sect3}

\subsection{Description of the algorithm}

\label{SubSectAlgorith} A numerical method based on the representation
\eqref{Sol1}--\eqref{Sol2} of the regular solution of the system
\eqref{EQ01}--\eqref{EQ02} consists in the following steps.

\begin{enumerate}
\item Compute a pair $(f_{0}, g_{0})$ of regular solutions of
\eqref{Bessel hom} satisfying \eqref{asympt f0 g0} using \eqref{Sol f0} and
\eqref{Sol g0}. Compute also their derivatives $(f_{0}^{\prime}, g_{0}%
^{\prime})$. In the case that the coefficient $p$ is complex valued, check if
the assumption (A) holds, and if not, proceed as described in Remark
\ref{Remark OtherWaySol} or look for a spectral shift (see Appendix B in
\cite[(8.1)]{KT2020improvedNeumann}) such that a pair of solutions $(f_{0},
g_{0})$ becomes non-vanishing.

\item Compute the coefficients $\beta_{j,n}$, $j\in\{1,2\}$, $n\in
\{0,1,\ldots,N\}$ using the formulas \eqref{Beta1}--\eqref{etan}. Note that
the coefficients $\beta_{j,n}$ satisfy \cite{KT2020improvedNeumann}
\begin{equation}
\label{betak verif}\sum_{n=0}^{\infty}(-1)^{n}\beta_{j,n}(r)=\frac{rQ_{j}%
(r)}2,\qquad r\in[0,b],\ j\in\{1,2\}
\end{equation}
and decay to zero (however, not necessary monotonously) as $n\to\infty$. The
equality \eqref{betak verif} can be used to estimate an optimal number of the
coefficients $N$, as a value where the truncated sums cease to decrease when
$N$ increases.

\item Compute approximate solutions $f_{\kappa, N}$ and $g_{\kappa,N}$ using
\eqref{Sol1} and \eqref{Sol2}.

\item The accuracy of the obtained approximations can be estimated by
calculating the discrepancies
\begin{equation}
f_{\kappa,N}^{\prime}-\frac{\kappa}{r}f_{\kappa,N}+p(r)f_{\kappa,N}-\omega
_{1}g_{\kappa,N}\qquad\text{and}\qquad g_{\kappa,N}^{\prime}+\frac{\kappa}%
{r}f_{\kappa,N}-p(r)f_{\kappa,N}+\omega_{2}g_{\kappa,N},\label{Eq discr}%
\end{equation}
where $f_{\kappa,N}^{\prime}$ and $g_{\kappa,N}^{\prime}$ are computed from
the truncated series \eqref{fprime} and \eqref{gprime}.
\end{enumerate}

We refer the reader to \cite{KNT 2015} and \cite{KTC} for implementation
details of the proposed algorithm.

\subsection{The Dirac oscillator}

As a test example for the proposed algorithm we consider the Dirac oscillator
\cite{MoSzcz1989, BMNS1990, DoGo1990}.

The large radial component $F(r)$ and the small radial component $G(r)$ of the
Dirac wave function are solutions of the following system
\begin{align}
\left( -\frac{d}{dr}+\left( \frac{\varepsilon(j+1/2)}{r}+m\omega r\right)
\right) G(r)  &  = (E-m)F(r),\label{ExEqDirac1}\\
\left( \frac{d}{dr}+\left( \frac{\varepsilon(j+1/2)}{r}+m\omega r\right)
\right) F(r)  &  = (E+m)G(r),\label{ExEqDirac2}%
\end{align}
where $j$ is the total angular momentum quantum number, $\varepsilon=\pm1$,
$m$ is the mass of the particle and $\omega$ is the frequency. Note that the
number
\[
l:=j+\frac\varepsilon2
\]
is the orbital momentum quantum number and is an integer number, i.e., the
fractional part of $j$ is always equal to $1/2$.

\begin{figure}[tbh!]
\centering
\begin{tabular}
[c]{cc}%
\includegraphics[bb=0 0 216 144
height=2n,
width=3in
]{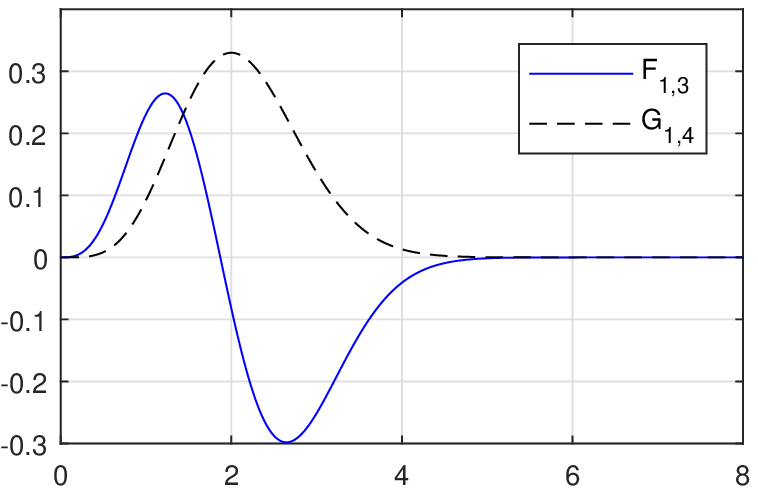} & \includegraphics[bb=0 0 216 144
height=2n,
width=3in
]{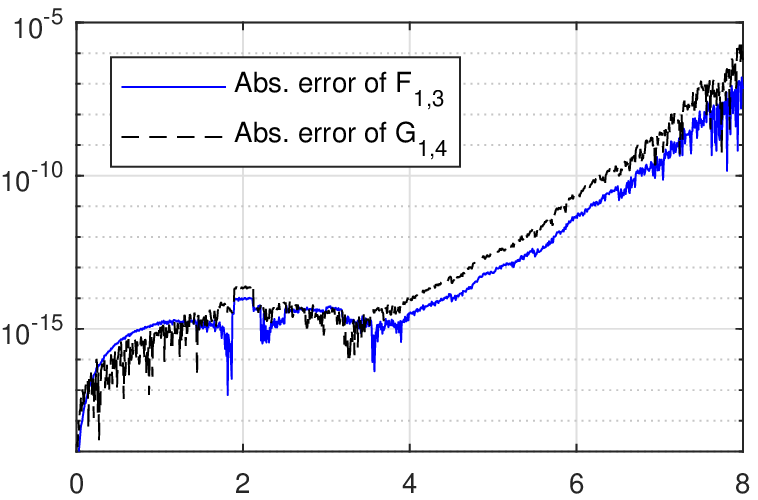}\\
\medskip\includegraphics[bb=0 0 216 144
height=2n,
width=3in
]{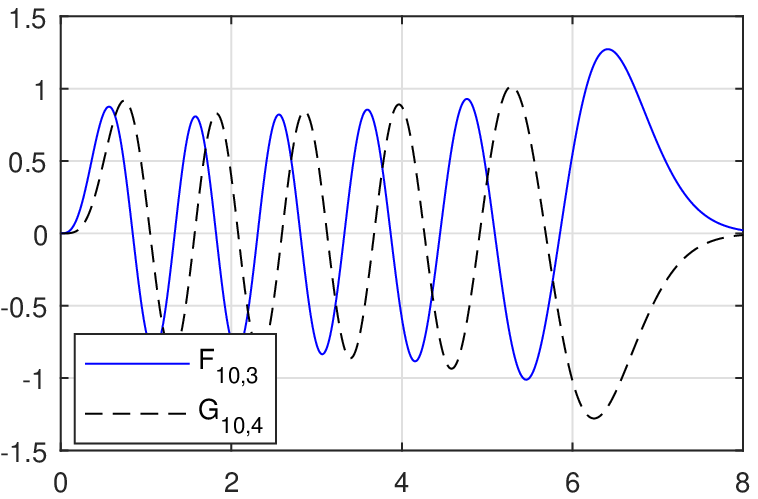} & \includegraphics[bb=0 0 216 144
height=2n,
width=3in
]{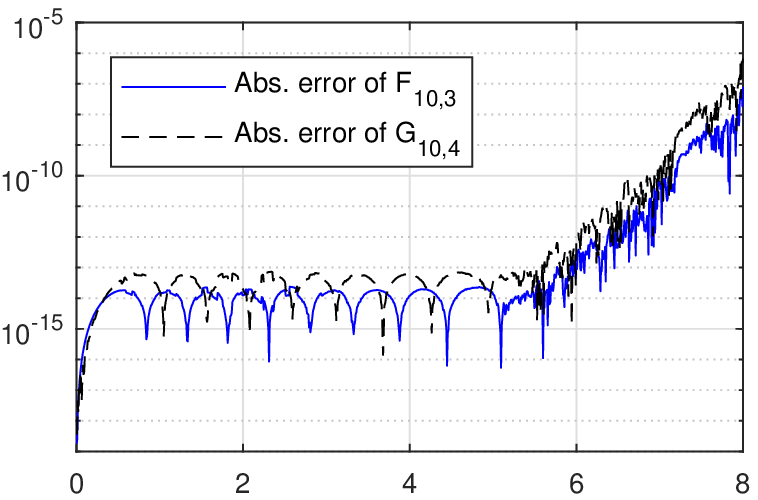}\\
\medskip\includegraphics[bb=0 0 216 144
height=2n,
width=3in
]{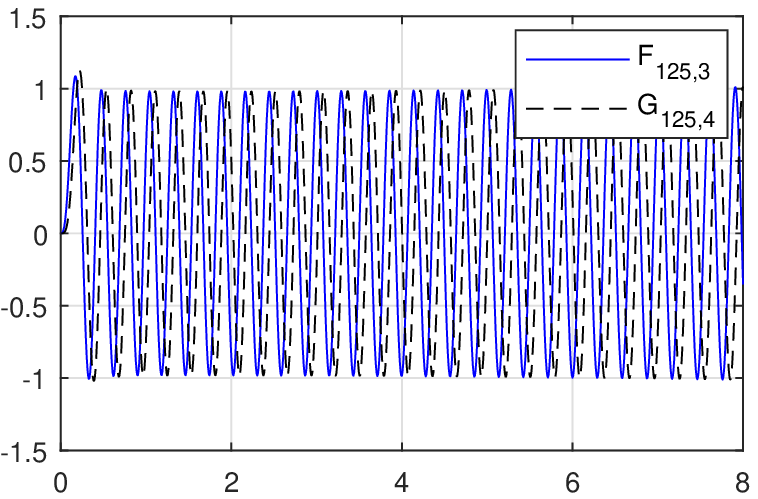} & \includegraphics[bb=0 0 216 144
height=2n,
width=3in
]{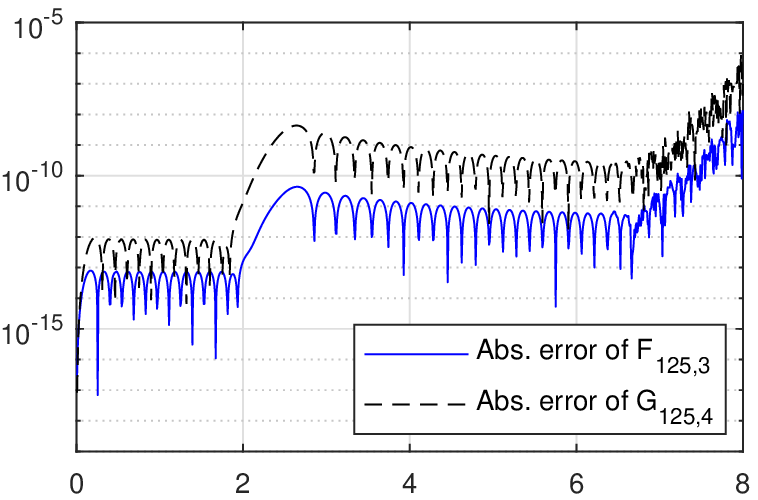}
\end{tabular}
\caption{On the left: components $F_{n,3}$ and $G_{n,4}$ of the eigenfunction
of the Dirac oscillator for $n\in\{1,10,125\}$ with the parameters $j=5/2$,
$\varepsilon=-1$ and $m=\omega=1$. On the right: absolute errors of these
components.}%
\label{Figure Eigs}%
\end{figure}

The energy spectrum can be obtained \cite{BMNS1990} from
\[
E^{2}-m^{2} = m\omega\bigl(2(N+1)+\varepsilon(2j+1)\bigr)
\]
for the positive-energy states, and from
\[
E^{2}-m^{2} = m\omega\bigl(2(N+2)+\varepsilon(2j+1)\bigr)
\]
for the negative-energy states. Here $N=2n+l$, $n=0,1,2,\ldots$, is the
principal quantum number. The corresponding eigenfunctions are given by
\begin{align}
F_{n,l}(r)  & = A\left(  r\sqrt{m\omega}\right) ^{l+1} \exp(-m\omega r^{2}/2)
L_{n}^{l+1/2}(m\omega r^{2}),\label{ExEigF}\\
G_{n,l-\varepsilon}(r)  & = A\left(  r\sqrt{m\omega}\right) ^{l+1-\varepsilon}
\exp(-m\omega r^{2}/2) L_{n+\varepsilon/2-1/2}^{l-\varepsilon+1/2}(m\omega
r^{2}),\label{ExEigG}%
\end{align}
where $L_{k}^{s}(x)$ is an associated Laguerre polynomial.

The system \eqref{ExEqDirac1}--\eqref{ExEqDirac2} is of the type considered in
this paper. Since the potential of the problem is increasing, we approximated
the semiaxis spectral problem (of finding the values of $E$ for which the
regular solution belongs to $L_{2}(0,\infty)$) by truncating the potential and
considering the Dirichlet boundary condition. For any non-trivial solution
both $f_{\kappa}$ and $g_{\kappa}$ can not be equal to zero at one point, so
we choose the function $f_{\kappa}$ and considered
\[
f_{\kappa}(B)=0
\]
as the boundary condition for the problem truncated onto $[0,B]$ segment. We
refer the reader to \cite[Section 7.4]{PryceBook} for additional details on
the convergence of the eigenvalues of truncated problems to the exact ones.

\begin{figure}[tbh]
\centering
\begin{tabular}
[c]{cc}%
$\varepsilon=1$ & $\varepsilon=-1$\\
\includegraphics[bb=0 0 216 173
height=2.4n,
width=3in
]{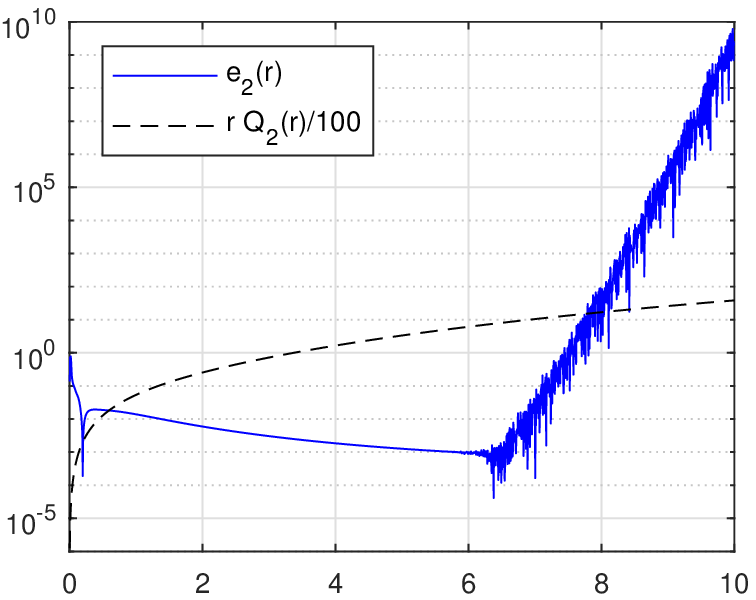} & \includegraphics[bb=0 0 216 173
height=2.4n,
width=3in
]{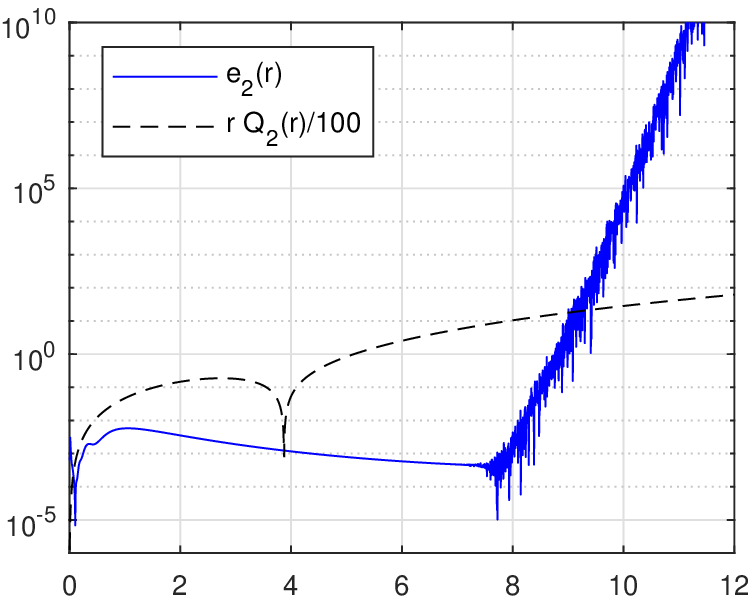}
\end{tabular}
\caption{Application of formula \eqref{betak verif} to determine optimal
truncation interval for the Dirac oscillator with the parameters $j=5/2$ and
$m=\omega=1$. On both plots black dashed line shows the value of
$|rQ_{2}(r)/100|$, and the blue solid line shows $e_{2}(r) := \min
_{N(r)\le100}\bigl|\sum_{n=0}^{N(r)}(-1)^{n} \beta_{2,n}(r) - rQ_{2}%
(r)/2\bigr|$ for computed coefficients $\beta_{2,n}$.}%
\label{Figure Crop}%
\end{figure}

All the computations were performed in machine precision using Matlab 2017. We
refer the reader to \cite{KTC} for the details of the numerical realization.
We considered two sets of parameters, having $\varepsilon=\pm1$ and in both
$j=5/2$ and $m=\omega=1$. For $\varepsilon=1$ the corresponding potential is
$p=-m\omega r$ in the notations of \eqref{EQ01}, \eqref{EQ02}, and for
$\varepsilon=-1$ the corresponding potential is $p=m\omega r$. In the first
case the corresponding particular solution $f_{0}$ given by \eqref{Sol f0} is
rapidly increasing, for the second case $f_{0}$ is rapidly decreasing. We
decided to not implement interval subdivision techniques, and utilize the
proposed representation directly to illustrate that even straightforward
implementation can deliver highly accurate results.

First, we compare approximate solutions with the exact ones for the case
$\varepsilon=-1$ for three eigenvalues $E^{2}-m^{2}\in\{4,40,500\}$,
corresponding to $n\in\{1,10,125\}$. In terms of the system \eqref{EQ01},
\eqref{EQ02} we have taken $\omega_{1}=2$, $\omega_{2}\in\{2,20,250\}$. On
Figure \ref{Figure Eigs} we present the solutions and corresponding absolute
errors. As one can observe, the error does not increase for large values of
$\omega$ (corresponding to higher index eigenfunctions) and only increases for
large values of $r$ due to machine precision limitations. The approach
presented in Remark \ref{Remark OtherWaySol} delivered a more accurate
solution component $G$. This is due to the error near $r=0$ in the particular
solution $g_{0}$ computed by \eqref{Sol g0}. For that reason on the plots we
present the absolute errors obtained with the aid of the formula from Remark
\ref{Remark OtherWaySol}.

Approximate solution of the spectral problem requires truncating the interval.
A larger interval allows one to compute more eigenvalues and more accurately.
However this leads to larger errors in all the coefficients $\beta_{j,n}$
computed, due to machine precision limitations. The equality
\eqref{betak verif} can be utilized to estimate automatically a truncation
parameter $B$. We took the segment $[0,20]$, represented all the functions
involved by 100001 uniformly spaced on $[0,20]$ points and computed 100
coefficients $\beta_{2,n}$. After that we checked for each $r$ the convergence
of partial sums in \eqref{betak verif} to $rQ_{2}(r)/2$. Due to machine
precision limitations, the difference between $\sum_{n=0}^{N}(-1)^{n}%
\beta_{2,n}(r)$ and $rQ_{2}(r)/2$ reaches a plateau at some particular value
of $N(r)$, meaning that the difference essentially does not decrease anymore
when $N$ increases. Let $e_{2}(r):=\bigl|\sum_{n=0}^{N(r)}(-1)^{n}\beta
_{2,n}(r)-rQ_{2}(r)/2\bigr|$. We chose as the truncation parameter $B$ the
value $0.99\cdot r_{0}$, where $r_{0}$ is such that for all $r<r_{0}$ the
value $e_{2}(r)$ is small in comparison with $rQ_{2}(r)$ (to be more precise,
$e_{2}(r)<|rQ_{2}(r)|/100$), but for $r>r_{0}$ the error $e_{2}(r)$ can be
larger than $|rQ_{2}(r)|/100$. As a result, $B=7.4786$ was chosen for
$\varepsilon=1$, and $B=9.0168$ was chosen for $\varepsilon=-1$. See Figure
\ref{Figure Crop} illustrating this procedure.

In Table \ref{Ex1Table1} we present approximate eigenvalues $E^{2}-m^{2}$ for
the parameters $\varepsilon=\pm1$, $j=5/2$, $m=\omega=1$ computed on the
truncated intervals $[0,B]$.

\begin{table}[tbh]
\centering
\begin{tabular}
[c]{cc}%
$\varepsilon= 1$, on $[0, 7.4786]$ & $\varepsilon= -1$, on $[0, 9.0168]$\\%
\begin{tabular}
[c]{|c|c|}\hline
Exact $E^{2}-m^{2}$ & Approximate\\\hline
14 & 13.999999999987\\
18 & 17.9999999998183\\
22 & 21.9999999982828\\
26 & 25.9999999642871\\
30 & 29.9999994276682\\
34 & 33.9999942694057\\
38 & 37.9999616653564\\
42 & 42.0001005681044\\
46 & 46.000048366715\\
50 & 50.0125330275323\\
54 & 54.0378367431326\\
58 & 58.2112436119225\\\hline
Number of $\beta_{2,n}$ used & 24\\\hline
\end{tabular}
&
\begin{tabular}
[c]{|c|c|}\hline
Exact $E^{2}-m^{2}$ & Approximate\\\hline
0 & $7.8\cdot10^{-32}$\\
4 & 3.99999999999999\\
8 & 7.99999999999994\\
12 & 12.0000000000002\\
16 & 16.0000000000035\\
20 & 20.0000000000206\\
24 & 24.0000000000766\\
28 & 27.9999999997443\\
32 & 32.0000000015208\\
36 & 35.9999999944383\\
40 & 39.9999997918537\\
44 & 44.0000015951936\\
48 & 48.0000051599966\\\hline
Number of $\beta_{2,n}$ used & 29\\\hline
\end{tabular}
\end{tabular}
\caption{The eigenvalues for the Dirac oscillator problem \eqref{ExEqDirac1},
\eqref{ExEqDirac2} truncated onto the segment $[0,B]$. Parameters used:
$\varepsilon=\pm1$, $j=5/2$, $m=\omega=1$. The last line shows the number of
terms used in approximate solution \eqref{fkN}.}%
\label{Ex1Table1}%
\end{table}

\end{document}